\newtheorem{theorem}{Theorem}
\newtheorem{corollary}{Corollary}
\newtheorem{myclaim}{Claim}
\newcommand{\vote}{\mathsf{vote}}
\definecolor{mypink1}{rgb}{0.858, 0.188, 0.478}
\newcommand{\kkcom}[1]{\todo[color=blue!20!white]{{K}: #1}}
\tikzstyle{vertex} = [circle, draw=black, fill=black, scale= 0.5]
\tikzstyle{edgelabel} = [circle, fill=white, scale= 0.9]
\tikzstyle{arrow} = [line width=0.8mm,-implies,double, double distance=0.8mm]
\tikzstyle{dashedpointedline} = [line width=0.2mm,dashed,dash pattern=on 2mm off 1mm,
\tikzstyle{pointedline} = [line width=0.3mm,
\newcommand\newtag[2]{#1\def\@currentlabel{#1}\label{#2}}
\title{Polynomially tractable cases\\ in the popular roommates problem} 
\author{Erika Bérczi-Kovács\\Eötvös Loránd University (ELTE), Hungary\\
MTA-ELTE Egerváry Research Group\\ on Combinatorial Optimization, Hungary 
\and
\'{A}gnes Cseh\\Hasso Plattner Institute, University of Potsdam, Germany \\
Institute of Economics,\\ Centre for Economic and Regional Studies, Hungary\and
Kata Kosztolányi\\
Eötvös Loránd University (ELTE), Hungary
\and
Attila Mályusz\\
Eötvös Loránd University (ELTE), Hungary}
\begin{document}

\maketitle

\begin{abstract}
The input of the popular roommates problem consists of a graph $G = (V, E)$ and for each vertex $v\in V$, strict preferences over the neighbors of~$v$. Matching $M$ is more popular than $M'$ if the number of vertices preferring $M$ to $M'$ is larger than the number of vertices preferring $M'$ to~$M$. A matching $M$ is called \emph{popular} if there is no matching~$M'$ that is more popular than~$M$.

Only recently Faenza et al.~\cite{FKPZ19} and Gupta et al.~\cite{GMSZ21} resolved the long-standing open question on the complexity of deciding whether a popular matching exists in a popular roommates instance and showed that the problem is $\NP$-complete. In this paper we identify a class of instances that admit a polynomial-time algorithm for the problem. We also test these theoretical findings on randomly generated instances to determine the existence probability of a popular matching in them.
\end{abstract}

\section{Introduction}

Our input is an instance $G = (V, E)$ of the stable roommates problem with strict and possibly incomplete preference lists. A matching $M$ is \emph{stable} if there is no {\em blocking pair} with respect to $M$, in other words, there is no pair of vertices $(a,b)$ such that $a$ is either unmatched or prefers $b$ to $M(a)$ ($a$'s partner in $M$) and similarly, $b$ is either unmatched or prefers $a$ to~$M(b)$. Irving~\cite{Irv85} gave a polynomial-time algorithm to decide whether $G$ admits a stable matching.

In this paper we consider {\em popularity}, a notion that is more relaxed than stability. 
For a vertex $v \in V$, $v$'s ranking over its neighbors can be extended naturally to a ranking over matchings as follows:
$v$ prefers matching $M$ to matching $M'$ if (i)~$v$ is matched in $M$ and unmatched in $M'$ or
(ii)~$v$ is matched in both and $v$ prefers $M(v)$ to $M'(v)$. 


The motivation of popularity comes from voting. Suppose an election is held between $M$ and $M'$ where each vertex casts a vote for the matching it prefers.  
We call a matching $M$ popular in the instance if it never loses an election to another matching $M'$. In voting terminology, each popular matching is a weak
{\em Condorcet winner}~\cite{Con85} 
in the corresponding voting instance.

Stable matchings are popular even in the non-bipartite case~\cite{Chu00}. Thus, if an instance admits a stable matching, then the existence of a popular matching is also guaranteed, and it can be found using Irving's algorithm for finding a stable matching~\cite{Irv85}. Some instances of the stable roommates problem do not admit a stable solution, yet they admit a popular matching, as demonstrated by Figure~\ref{fig:sr_pm}, first presented by Biró et al.~\cite{BIM10}.

\begin{figure}[ht]
	\centering
			\begin{minipage}{0.2\textwidth}
		\[
		\begin{array}{lll}
		a  : b, \ & d, \ & e \\
		b  : d, \ & a, \ & e \\
		d  : a, \ & b, \ & e \\
		e  : d, \ & b, \ & a
		\end{array}
		\]
	\end{minipage}\hspace{17mm}\begin{minipage}{0.35\textwidth}	
		\begin{tikzpicture}[scale=0.97, transform shape]
		\pgfmathsetmacro{\d}{5}
		\pgfmathsetmacro{\b}{3}
		
    \node[vertex, label=below:$a$] (A1) at (0,0) {};
	\node[vertex, label=below:$b$] (A2) at ($(A1) + (\d, 0)$) {};
	\node[vertex, label=above:$d$] (A3) at ($(A1) + (0.5*\d, 0.8*\d)$) {};
	\node[vertex, label=below:$e$] (A4) at ($(A1) + (0.5*\d, 0.3*\d)$) {};

	\draw [thick, color=gray, dashed] (A1) -- node[edgelabel, near start] {1} node[edgelabel, near end] {2} (A2);
	\draw [thick] (A2) -- node[edgelabel, near start] {1} node[edgelabel, near end] {2} (A3);
	\draw [thick, dotted] (A3) -- node[edgelabel, near start] {1} node[edgelabel, near end] {2} (A1);
	
	\draw [thick] (A1) -- node[edgelabel, near start] {3} node[edgelabel, near end] {3} (A4);
	\draw [thick, dotted] (A2) -- node[edgelabel, near start] {3} node[edgelabel, near end] {2} (A4);
	\draw [thick, color=gray, dashed] (A3) -- node[edgelabel, near start] {3} node[edgelabel, near end] {1} (A4);	
\end{tikzpicture}
\end{minipage}
	\caption{The dashed gray edges mark one of the two popular matchings $M=\{(a,b), (d,e)\}$. It is blocked by the edge $(b,d)$. The other popular matching is marked by dotted edges. The instance admits no stable matching.}
	\label{fig:sr_pm}
\end{figure}
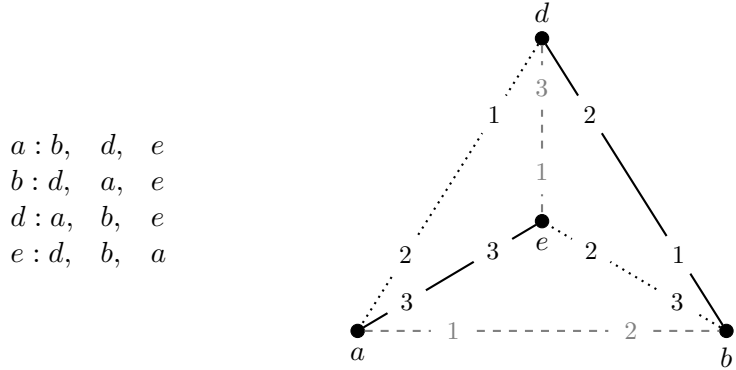

Validating whether a given matching is popular can be done in polynomial time
~\cite{BIM10}. Only recently Faenza et al.~\cite{FKPZ19} and Gupta et al.~\cite{GMSZ21} resolved the long-standing~\cite{BIM10,Cse17,HK13a,HK17,Man13} open question on the complexity of deciding whether a popular matching exists in a popular roommates instance and showed that the problem is $\NP$-complete. This hardness result remains valid even if the preference lists are complete~\cite{CK21}. However, this hardness result is only valid for complete graphs with an {\em even} number of vertices. It is known that when $n$, the number of vertices in $G$ is {\em odd}, a matching in a complete graph $G$ on $n$ vertices is popular only if it is stable~\cite{CK21}. Since Irving's algorithm can decide if $G$ admits a stable matching or not~\cite{Irv85}, the popular roommates problem in a complete graph $G$ can be efficiently solved for odd~$n$. 

\subsection{Our contribution}
In this paper, we stretch this observation even further and show that for \textit{high-degree} graphs with an \textit{odd} $n$, there is a polynomial algorithm that solves the popular roommates problem. More precisely, our algorithm runs in polynomial time for a constant $c$, where $n-c$ is the minimum degree in the graph.

Our key technical result is an algorithm to decide whether there exists a popular matching that leaves exactly a given set of vertices uncovered. We prove that popular matchings with this property can be decomposed into a stable and a popular part. We iterate through the possible popular parts and construct the fitting stable part in a carefully designed subroutine.

We also test our algorithm on randomly generated Erd\H{o}s-R\'{e}nyi graphs and conclude that on graphs with a high minimum degree, our method is significantly faster than checking the matchings in the instance for popularity. Furthermore, we also tally the instances with a popular but with no stable matching. Their number decreases for a fixed $c$ and increasing $n$, and increases for a fixed $n$ and increasing~$c$. 

\subsection{Literature review}

The notion of popularity was first introduced for bipartite graphs in 1975 by G\"ardenfors---popular matchings always exist in bipartite graphs since stable matchings always exist here~\cite{GS62} and every stable matching is popular~\cite{Gar75}. The proof that every stable matching is popular holds in non-bipartite graphs as well~\cite{Chu00}; in fact, it is easy to show that every stable matching is a min-size popular matching in both settings~\cite{HK11}. 





Huang and Kavitha~\cite{HK17} showed that the polytope of popular fractional matchings is half-integral in the non-bipartite case. This means that one can compute a maximum weight popular half-integral matching in polynomial time. They also showed that the problem of computing an integral maximum weight popular matching in a non-bipartite instance is $\NP$-hard. Finding a maximum cardinality popular matching in instances admitting a popular matching has also been shown to be $\NP$-hard~\cite{Kav18}. The breakthrough regarding the complexity of the popular roommates problem came after these works: two independent $\NP$-completeness proofs, by Faenza et al.~\cite{FKPZ19} and Gupta et al.~\cite{GMSZ21}, were published in 2019.

Most positive results in the setting are also very recent. The only known tractable subclasses of popular matchings are the class of stable matchings and the class of so-called `strongly dominant matchings', which is a subclass of max-size popular matchings~\cite{FKPZ19}. If the underlying graph $G$ has a bounded treewidth, then the min-cost popular matching problem can be solved in polynomial time~\cite{FKPZ19}. Kavitha provided a simply exponential time algorithm for the popular roommates problem~\cite{Kav19}. Interestingly, its running time was slightly decreased very recently by Palmer and P\'alv\"olgyi~\cite{PP20}, who improved the upper bound on the number of stable matchings in bipartite matching instances. Huang and Kavitha~\cite{HK13a} proved that each roommates instance admits a matching that has the approximability measure called `unpopularity factor' of $O(\log n)$. 

The existence probability of a stable matching in randomly generated roommates instances has been an actively researched topic for decades, which lead to both theoretical~\cite{Irv85,Pit93,PI94} and experimental findings~\cite{Pro14,Mer15,EFM+20}. Until now, randomly generated popular matching instances have only been studied in the context of bipartite graphs, both with one-sided~\cite{Mah06,IW10,RI18} and two-sided preferences~\cite{AIKM07}.

\section{Preliminaries}

Consider a graph $G = (V,E)$ with $n$ vertices and $m$ edges. Let $N(u)$ denote the vertices adjacent to $u \in V$ in $G$, and let $N(U)$ for $U \subseteq V$ be the set of vertices that are adjacent to at least one vertex in~$U$. Furthermore, let $G[U]$ be the graph spanned by the vertex set~$U$.

Each vertex ranks all adjacent vertices in a strict order of preference. We write $v \succ_u w$ if $u$\ prefers\ $v$\ to\ $w$. A \textit{matching} $M \subseteq E$ is a set of edges such that each vertex is incident to at most one edge in~$M$. For convenience, we denote $u$'s partner in $M$ by $M(u)$, and write $M(u) = u$ if $u$ is unmatched in~$M$. For a set $X \subseteq V$, $M(X) = \bigcup_{v\in X}M(v)$. We assume that $v \succ_u u$ for all $v \in N(u)$. In other words, each vertex prefers being matched along any of its edges to staying unmatched.

We now introduce the edge labeling technique that is standard in the field~\cite{HK11}. Let $M$ be a matching in $G$. For each edge $(u,v) \notin M$, we define the vote of vertex $u$ between the edge $(u,v)$ and $M$ as follows: 
\begin{equation*}
\vote_u(v,M) = \begin{cases} +   & \text{if\ $v \succ_u M(u)$};\\
	                     - &  \text{if\ $M(u) \succ_u v$.}			
\end{cases}
\end{equation*}

The next step is to label every edge $(u,v)$ that does not belong to $M$ by the pair $(\vote_u(v,M),\vote_v(u,M))$. Thus every non-matching edge has a label in $\{(\pm,\pm)\}$. Note that an edge is labeled $(+,+)$ if and only if it blocks~$M$. Let $G_M$ be the subgraph of $G$ obtained by deleting edges labeled $(-,-)$ from $G$. The following theorem characterizes popular matchings in~$G$. 

\begin{theorem}[Huang and Kavitha~\cite{HK11}]
  \label{thm:char-popular}
$M$ is popular in $G$ if and only if $G_M$ does not contain any of the following with respect to~$M$:
	\begin{enumerate}
		\item an alternating cycle with a $(+,+)$ edge;
		\item an alternating path with two disjoint $(+,+)$ edges;
		\item an alternating path with a $(+,+)$ edge and an unmatched vertex as an end vertex.
	\end{enumerate}
\end{theorem}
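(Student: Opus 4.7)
The natural route is the standard symmetric-difference vote argument. For any two matchings set $\Delta(M',M) = \sum_{v \in V} \vote_v(M',M) \in \mathbb{Z}$, where $\vote_v(M',M) \in \{-1,0,+1\}$ records $v$'s preference between the two matchings; then $M$ is popular if and only if $\Delta(M',M) \leq 0$ for every $M'$. Since $M \triangle M'$ has maximum degree two, I would decompose it into alternating paths and cycles and express $\Delta(M',M)$ as a sum of per-component contributions. A short calculation should yield the clean formula $2(p-n) - b$ for the contribution of a component $C$, where $p$ and $n$ count the $(+,+)$ and $(-,-)$ non-matching edges of $C$ (mixed-label edges cancel out), and $b \in \{0,1,2\}$ is the number of endpoints of $C$ that are matched in $M$ but unmatched in $M'$ (so $b = 0$ on a cycle, and $b = 0$ also on a path whose two endpoints are both unmatched in $M$).

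For the ``only if'' direction, I would take any of the three forbidden configurations $X$ sitting inside $G_M$ and set $M' := M \triangle X$. Because $X$ lies in $G_M$, its $n$ equals $0$, so the contribution of that component reduces to $2p - b$. A type-1 cycle gives $p \geq 1$ and $b = 0$, hence contribution $\geq 2$. A type-2 path gives $p \geq 2$ and $b \leq 2$, hence contribution $\geq 2$. A type-3 path gives $p \geq 1$ together with at least one endpoint that is unmatched in $M$ and therefore matched in $M'$, which forces $b \leq 1$ and yields contribution $\geq 1$. In each case $\Delta(M',M) > 0$, meaning $M$ is beaten by $M'$ and hence not popular.

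For the converse, I would show that if none of the three structures appears in $G_M$, then the per-component inequality $2(p-n) \leq b$ holds for every component $C$ of every $M \triangle M'$; summing then gives $\Delta(M',M) \leq 0$, as required. If $n = 0$, then $C$ already lies entirely in $G_M$, so a positive contribution on $C$ would directly exhibit one of the three forbidden configurations, using that on an alternating path or cycle the non-matching edges are automatically vertex-disjoint (so two $(+,+)$ edges are always disjoint). If $n \geq 1$, I would delete all $(-,-)$ edges from $C$, producing a collection of alternating subpaths inside $G_M$ whose new endpoints are necessarily matched in $M$, since a $(-,-)$ edge is a non-matching edge flanked on both sides by matching edges along an alternating sequence. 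Assuming the inequality fails on $C$, a pigeonhole argument on how the $p$ many $(+,+)$ edges distribute across these subpaths then forces either a subpath carrying two disjoint $(+,+)$ edges (structure~2, since the new endpoints are matched in $M$) or an end-subpath containing a $(+,+)$ edge together with an original endpoint of $C$ that is unmatched in $M$ (structure~3), yielding the required contradiction.

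The step I expect to be the main obstacle is the careful bookkeeping in this ``if'' direction when $n \geq 1$. One must simultaneously track the cycle-versus-path distinction for $C$, the matched/unmatched status of both original endpoints of $C$, the number and placement of the $(-,-)$ edges along $C$, and the inherited status of the end-subpaths as opposed to the middle subpaths whose two new endpoints are always matched in $M$. Once the case analysis is organized so that every positive contribution exhibits one of structures~1--3, summation over the components closes the proof.
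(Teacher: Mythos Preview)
The paper does not prove this theorem at all: it is quoted from Huang and Kavitha~\cite{HK11} as a preliminary tool and used as a black box (e.g.\ in the proof of Claim~\ref{extension}). So there is no ``paper's own proof'' to compare against.

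That said, your sketch is essentially the standard symmetric-difference argument and is correct. The component formula $2(p-n)-b$ is right, and your pigeonhole analysis for the ``if'' direction goes through once the cases (cycle; path with $b=0,1,2$) are written out as you indicate: after removing the $n$ many $(-,-)$ edges you get $n$ subpaths from a cycle and $n+1$ from a path, with all newly created endpoints lying on $M$-edges; then $2(p-n)>b$ forces either some subpath to carry two $(+,+)$ edges (structure~2) or an end-subpath with an $M$-unmatched original endpoint to carry one (structure~3). Your observation that an endpoint of $C$ incident only to an $M'$-edge must be unmatched in $M$ (since otherwise its $M$-edge would also lie in $C$) is exactly what rules out a $(-,-)$ edge sitting at the very end of a path, so the ``new endpoints are matched in $M$'' claim is safe.

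One small point you glossed over in the ``only if'' direction: for $M':=M\triangle X$ to be a matching, every endpoint of $X$ that is incident to a non-$M$ edge of $X$ must be unmatched in $M$. An arbitrary alternating path witnessing structure~2 or~3 need not have this property, but you can always extend such a path by the missing $M$-edge(s); either this terminates at an $M$-endpoint, or the extension closes up into a cycle, which is then a structure-1 witness. Once you say this explicitly, the argument is complete.
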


Using the above characterization, it can be easily checked whether a given matching $M$ is popular or not~\cite{KMN11}: $M$ is popular if and only if it is a maximum weight perfect matching in the weighted graph we get by first adding a loop to each vertex and then assigning weight -1 to those loops that belong to vertices covered by $M$, 2 to all $(+,+)$ edges, -2 to all $(-,-)$ edges, and 0 to all other edges and loops. However, this only settles the complexity of verification.

\textbf{Structure of the paper.} We first explain the high-level idea of our algorithm in Section~\ref{sec:alg}. Its proof of correctness and the exact implementation details are then given in Section~\ref{sec:proofs}. Section~\ref{sec:ex} contains a detailed example of the execution. In Section~\ref{sec:exp} we analyze the performance of our algorithm on randomly generated graphs. We conclude and pose open questions in Section~\ref{sec:concl}. Two full example runs are to be found in the appendix.

\section{Our algorithm}
\label{sec:alg}

As we have already observed in the Introduction, if the instance admits a stable matching, then it is popular as well, so the existence question is only interesting for instances that do not admit any stable matching. From now on, we restrict our attention to these instances.

We are given an instance of the popular roommates problem that does not admit a stable matching. Let $U$ be
a non-empty set of vertices. 
We now present an algorithm with which it can be checked whether there is a popular matching in $G$ that leaves exactly $U$ uncovered.


To a given $U$, let $Z=V\setminus N(U)\setminus U$ denote the set of those vertices that are not in $U$ and also not adjacent to \emph{any} vertex in~$U$. It is clear that the set $Z$ is well-defined for each set~$U$. In our algorithm (see Algorithm~\ref{alg: check U} for a pseudocode), we build a set $\mathcal{P}_Z$ of initial matchings and test for each $P_Z \in \mathcal{P}_Z$ whether it can be completed to a popular matching that leaves exactly $U$ uncovered. The construction of $\mathcal{P}_Z$ is simple: we list all matchings in which each edge has at least one end vertex in $Z$ 
and also cover each vertex in~$Z$. Then, we examine each such $P_Z \in \mathcal{P}_Z$ in three steps.
\begin{enumerate}
    \item If $P_Z$ is not popular in $G'=G[Z\cup P_Z(Z) \cup U]$, then we stop testing this $P_Z$ and conclude that $P_Z$ cannot be extended to a desired popular matching.
    \item  If there is no path $v-P_Z(v)-P_Z(w)-w$ with $(P_Z(v),P_Z(w))$ blocking $P_Z$, then we output the same message and stop examining~$P_Z$.
    \item Otherwise, there is at least one alternating path traversing through a blocking edge in~$G'$. 
    In Claim~\ref{extension} we show that matching $P_Z$ can be extended to a popular matching in $G$ with edge set $S$ leaving exactly $U$ uncovered if and only if $S$ is a stable complete matching in a transformed graph. Later, in Section~\ref{sec:imp} we will show how the existence of a desired stable matching can be checked in polynomial time.
\end{enumerate}

\begin{algorithm}
\DontPrintSemicolon 
\KwIn{A popular matching problem instance that admits no stable matching and a non-empty set of vertices $U\subseteq V$.}
\KwOut{A popular matching covering exactly $V\setminus U$ or a 'NO' answer.}
\nl $Z:=$ set of vertices in $V\setminus U$ that are not adjacent to any vertex in $U$\; 
\nl $\mathcal{P}_Z:=$ set of matchings $P_Z$ covering $Z$ such that for every edge $(p,q)\in P_Z$: $Z\cap \{p,q\} \neq \emptyset$\;
\nl \For {$P_Z \in \mathcal{P}_Z$} {
    \nl \If {$P_Z$ is not popular in $G'=G[Z\cup P_Z(Z) \cup U]$} {\Continue\;}
    \nl \If {there is no blocking edge for $P_Z$ in $G'$} {\Continue\;}
    \nl \If {a complete stable matching $S$ exists on $V\setminus V'$ fulfilling the conditions in Claim~\ref{extension} }{\Return{$P_Z\cup S$} \;}
    }
\nl \Return{\text{`NO'}}\;
\caption{Subroutine checking if set $U\neq \varnothing$ can be the set of uncovered vertices in a popular matching}\label{alg: check U}
\end{algorithm}

\section{Proof of correctness}
\label{sec:proofs}

In Sections~\ref{sec:1t}-\ref{sec:3t} we prove the necessity of our three tests in lines~4-6 in our pseudocode. 
Then in Section~\ref{sec:imp} we elaborate on the implementation of the last one of these. Finally in Section~\ref{sec:corr} we complete the proof of correctness and provide a running time analysis.

\subsection{First test}
\label{sec:1t}
Our first claim shows the necessity of the first step in line~4. 
More precisely, we show that if a matching $P_Z$ is not popular in $G'$, then it cannot be extended to a popular matching in $G$, and thus we can proceed to testing the next candidate for~$P_Z$.

\begin{myclaim}\label{P_Z not pop}
    If the constructed matching $P_Z$ is not popular in $G'$, then no matching $P$ containing $P_Z$ is popular in~$G$.
\end{myclaim}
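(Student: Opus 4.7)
The plan is to lift any witness of non-popularity of $P_Z$ in $G'$ to a witness of non-popularity of $P$ in $G$ via Theorem~\ref{thm:char-popular}. So I would fix a matching $P \supseteq P_Z$ and compare the labeled graphs $G'_{P_Z}$ and $G_P$ on the common vertex set $V(G') = Z \cup P_Z(Z) \cup U$.

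First I would check two bookkeeping facts about partners and unmatched vertices. Since $Z \cap U = \emptyset$ by definition, and since $P_Z(Z) \cap U = \emptyset$ (vertices in $Z$ have no neighbors in $U$, so their $P_Z$-partners cannot lie in $U$), the three sets $Z$, $P_Z(Z)$, $U$ partition $V(G')$. For any $v \in Z \cup P_Z(Z)$ we have $P(v) = P_Z(v)$ because $P \supseteq P_Z$ and $v$ is already matched in $P_Z$. For $v \in U$, both $P_Z$ and $P$ leave $v$ unmatched. Consequently, the vertices of $G'$ that are unmatched in $P_Z$ are exactly the vertices of $U$, which are also exactly the vertices of $G'$ that are unmatched in $P$.

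Next I would show that the edge labels agree. Take any non-matching edge $(u,v) \in E(G') \setminus P_Z$; this is also a non-matching edge of $P$. The vote of each endpoint is determined by its partner in the matching, and by the previous paragraph these partners are identical in $P_Z$ and in $P$ (treating an $U$-vertex as preferring any incident edge to being unmatched in both cases). Hence $\vote_u(v, P_Z) = \vote_u(v, P)$ and symmetrically, so $(u,v)$ gets the same label $(\pm,\pm)$ in $G'_{P_Z}$ as in $G_P$. In particular, $(u,v)$ belongs to $G'_{P_Z}$ iff it belongs to $G_P$ (both delete exactly the $(-,-)$-labeled edges), and it is a $(+,+)$-edge in one iff in the other.

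Finally, assume $P_Z$ is not popular in $G'$. Theorem~\ref{thm:char-popular} furnishes one of the three forbidden configurations in $G'_{P_Z}$: an alternating cycle with a $(+,+)$ edge, an alternating path with two disjoint $(+,+)$ edges, or an alternating path with a $(+,+)$ edge ending at a $P_Z$-unmatched vertex. By the label-agreement above, the same edges form an alternating cycle/path in $G_P$ with the same $(+,+)$ edges, and a $P_Z$-unmatched endpoint in $U$ is also $P$-unmatched. Thus the identical configuration appears in $G_P$, so Theorem~\ref{thm:char-popular} certifies that $P$ is not popular in $G$. The only thing that requires care, and the one place where I would be slightly cautious, is ruling out that the forbidden structure uses vertices outside $V(G')$; but this is immediate since the structure lives in $G'_{P_Z}$ by assumption.
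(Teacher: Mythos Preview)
Your proof is correct and follows the paper's approach: both argue that $G'_{P_Z}$ sits inside $G_P$ with identical edge labels, so any forbidden configuration from Theorem~\ref{thm:char-popular} for $P_Z$ in $G'$ is also one for $P$ in $G$; you simply spell out in detail what the paper compresses into the single sentence ``$G'_{P_Z}$ is a subgraph of $G_P$.''

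Two small remarks. First, $Z$ and $P_Z(Z)$ need not be disjoint (two $Z$-vertices may be matched to each other), so ``partition'' is slightly off, though this is harmless since all you actually use is that $V(G')\setminus U = Z\cup P_Z(Z)$ is fully covered by $P_Z$. Second, your assertion that every $v\in U$ is unmatched in $P$ is not part of the claim's hypothesis as literally stated; it is, however, exactly the setting in which the claim is applied in Algorithm~\ref{alg: check U}, and the paper's one-line proof tacitly relies on the same reading (otherwise an edge from $U$ to $P_Z(Z)$ could drop from $(+,-)$ to $(-,-)$ and $G'_{P_Z}\subseteq G_P$ would fail).
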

\begin{proof}
    The augmenting path or cycle $P_Z$ admits in $G'_{P_Z}$ also exists in $G_P$ if $P_Z \subseteq P$, since $G'_{P_Z}$ is a subgraph of~$G_P$.
\end{proof}

\subsection{Second test}
\label{sec:2t}
Next we turn to the second test in line~5. The key observation is that edges blocking $P$ cannot occur anywhere in the graph, but only between vertices that $P$ matches to~$Z$. This observation is a corollary of the following claim.

\begin{myclaim}\label{block-component}
    Let $P$ be a popular matching that leaves exactly $U$ uncovered. 
    If edge $(p,q) \in P$ can be reached on an alternating path in the graph $G_P$ from an edge that blocks $P$ such that $q$ is the last vertex on that path, then $q$ is in~$Z$.
\end{myclaim}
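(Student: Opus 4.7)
The approach is a proof by contradiction: I would assume $q\notin Z$ and derive a configuration forbidden by Theorem~\ref{thm:char-popular}. Since $(p,q)\in P$, the vertex $q$ is matched, so $q\notin U$. Combined with $q\notin Z = V\setminus N(U)\setminus U$, this forces $q\in N(U)$, so $q$ admits some neighbor $u\in U$, and this $u$ is unmatched in $P$ by the hypothesis on $U$.

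Next I would examine the non-matching edge $(q,u)$. Because $u$ is unmatched, the convention $v\succ_u u$ for every neighbor $v$ yields $\vote_u(q,P)=+$, so $(q,u)$ is not labeled $(-,-)$ and hence belongs to $G_P$. Let $\pi$ be the alternating path in $G_P$ guaranteed by the hypothesis, beginning with a $(+,+)$ edge that blocks $P$ and terminating at $q$ with last edge $(p,q)\in P$. Since this last edge lies in $P$, appending the non-matching edge $(q,u)$ preserves the matching/non-matching alternation. If $u$ does not already lie on $\pi$, the concatenation $\pi\cdot(q,u)$ is a proper alternating path in $G_P$ that contains a $(+,+)$ edge and ends at the unmatched vertex $u$. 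This is precisely configuration~3 of Theorem~\ref{thm:char-popular}, contradicting the popularity of $P$.

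The only subtlety is handling the case where $u$ already appears on $\pi$ at some internal position. I would then truncate $\pi$ at the first occurrence of $u$: the resulting prefix still starts with the blocking edge (or, in the degenerate case where $u$ coincides with the first vertex of the blocking edge, that single blocking edge is itself an alternating path with a $(+,+)$ edge and an unmatched endpoint), and ends at the unmatched vertex $u$. Either way, Theorem~\ref{thm:char-popular}(3) is violated. I expect this minor bookkeeping about whether $u$ is fresh or revisited on $\pi$ to be the only technical point; everything else is a direct invocation of the characterization theorem, so I do not foresee a substantial obstacle.
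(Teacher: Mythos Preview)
Your approach is essentially the same as the paper's: assume $q\notin Z$, deduce $q\in N(U)$, pick an unmatched neighbour $u\in U$, and extend the alternating path by the edge $(q,u)$ to obtain a configuration forbidden by Theorem~\ref{thm:char-popular}(3). The paper's proof simply asserts that $(q,u)$ is a $(+,-)$ edge and appends it, without discussing the possibility that $u$ already lies on $\pi$; your extra bookkeeping about truncating at a revisited $u$ is harmless and in fact slightly more careful (note that since $u$ is unmatched it could only be the initial endpoint of $\pi$, in which case $\pi$ itself already violates condition~3).
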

\begin{proof}
    We indirectly suppose that there is an edge $(p,q) \in P$ that can be reached on an alternating path in $G_P$ from a blocking edge such that $q$ is the last vertex on that path, yet $q \notin Z$. If $q \notin Z$ then $q$ must be in $N(U)$, since no vertex in $U$ is matched in~$P$. Now we take the alternating path in $G_P$ that traverses through $(p,q)$. We lengthen this path past $q$ toward $U$ along the $(+,-)$ edge that connects $q$ to some $u \in U$.
    With this new path we have shown that to $P$, there is an alternating path that traverses a blocking edge and ends in an unmatched vertex, which contradicts the assumption on the popularity of~$P$.
\end{proof}

\begin{corollary}\label{blocking}
Let $P$ be a popular matching that leaves exactly $U$ uncovered. Each edge blocking $P$ connects two vertices that $P$ matches to vertices in~$Z$.
\end{corollary}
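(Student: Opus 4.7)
The plan is to derive the corollary as a direct consequence of Claim~\ref{block-component}, applied separately at each endpoint of a blocking edge, after a brief preliminary step that rules out unmatched endpoints.

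Let $(a,b)$ be an edge blocking $P$. The first step is to verify that $a,b \notin U$, i.e.\ both endpoints of $(a,b)$ are matched in $P$. If $a \in U$, then $a$ is an unmatched endpoint either of the one-edge alternating path consisting of $(a,b)$ alone (in case $b$ is also unmatched) or of the two-edge alternating path $a - b - P(b)$ (in case $b$ is matched). In either case we obtain an alternating path containing the $(+,+)$ edge $(a,b)$ and having an unmatched endpoint, contradicting condition~3 of Theorem~\ref{thm:char-popular}. The case $b \in U$ is handled symmetrically.

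Having established that both $a$ and $b$ are matched, I would apply Claim~\ref{block-component} twice. The blocking edge $(a,b)$ carries label $(+,+)$ and therefore lies in $G_P$; every matching edge of $P$ is in $G_P$ as well. Hence the alternating path $b - a - P(a)$ is a valid path in $G_P$ that begins at the blocking edge $(a,b)$ and ends at the matching edge $(a, P(a))$ with terminal vertex $P(a)$. Claim~\ref{block-component} then yields $P(a) \in Z$. The symmetric argument applied to the alternating path $a - b - P(b)$ gives $P(b) \in Z$, completing the proof.

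The only real obstacle I anticipate is the preliminary step: since a vertex votes $+$ on any edge when it is unmatched, a $(+,+)$ edge could in principle be incident to $U$, and the claim cannot be invoked on $P(a)$ or $P(b)$ unless these partners are known to exist. Once condition~3 of Theorem~\ref{thm:char-popular} has been used to rule out this possibility, the corollary is essentially a direct specialization of Claim~\ref{block-component} to the matching edges immediately adjacent to a blocking edge.
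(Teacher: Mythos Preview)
Your proposal is correct and follows the approach the paper intends: the paper presents Corollary~\ref{blocking} as an immediate consequence of Claim~\ref{block-component} and gives no separate proof, so your argument simply spells out that derivation, including the small preliminary check that neither endpoint of a blocking edge lies in~$U$.
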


\subsection{Third test}
\label{sec:3t}
For better readability, we introduce the notation $V' = Z\cup P_Z(Z) \cup U$ for the vertex set of graph~$G'$. Notice that $V\setminus V' = N(U) \setminus P_Z(Z)$, which is the set of vertices that are adjacent to at least one vertex in $U$ and not covered by~$P_Z$. Let us denote by $D$ the set of `dangerous' vertices. These are vertices 
that can be reached on an alternating path from a blocking edge in $G'_{P_Z}$ 
such that counting from the blocking edge, $z$ is the further end of the matching edge $(P(z),z)$ on the path. Note that from Claim~\ref{block-component} we get that $D\subseteq Z$.

Dangerous vertices play a crucial role in our third test in line~6. These vertices can be reached on an alternating path from a blocking edge, which path must be discontinued in $G_P$ before it reaches~$U$. Our next claim guarantees that we extend $P_Z$ to a $P$ that fulfills this criterion, provided it is possible for the given $P_Z$ at all.

\begin{myclaim}\label{extension}
    Assume that $P_Z$ is popular but not stable in~$G'$. 
    In $G$ there is a popular matching $P$ that contains $P_Z$ and leaves exactly $U$ uncovered if and only if $P \setminus P_Z$ is a stable matching $S$ in the graph spanned by the vertices $V\setminus V'$, 
    covers exactly the vertices in $V\setminus V'$, and induces the following edge labeling with respect to $P= P_Z\cup S$.

    \begin{enumerate}
        \item $(+,-)$ for edges incident to $U$, with a  $'+'$ at the vertex in $U$,
        \item $(-,-)$ for edges between $D$ and $V\setminus V'$,
        \item $(+,-)$ for an edge $(v',x)$ 
        where $v' \in V'\setminus (D \cup U)$,  $x\in V\setminus V'$, and $v'$ prefers $x$ to $P(v')$, with a $'+'$ at~$v'$. 
    \end{enumerate}
\end{myclaim}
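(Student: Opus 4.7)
The plan is to treat the two directions of the iff separately, using Theorem~\ref{thm:char-popular} and Corollary~\ref{blocking} as the main ingredients.

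For the forward direction, assume $P$ is popular in $G$, contains $P_Z$, and leaves exactly $U$ uncovered, and set $S:=P\setminus P_Z$. Since $P$ covers $V\setminus U$ and $P_Z$ covers $Z\cup P_Z(Z)$, $S$ covers exactly $V\setminus V' = N(U)\setminus P_Z(Z)$, and $S\subseteq E(V\setminus V')$ because its edges cannot meet vertices already matched by $P_Z$. Stability of $S$ on $G[V\setminus V']$ is immediate from Corollary~\ref{blocking}: any blocking edge for $S$ inside $V\setminus V'$ would block $P$, but every edge blocking $P$ has both endpoints in $P_Z(Z)\subseteq V'$. I would then derive conditions~1--3 by contradictions against Theorem~\ref{thm:char-popular}. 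For (1), if $(u,y)$ with $u\in U$ were $(+,+)$, the single edge itself is an alternating path with a $(+,+)$ edge and an unmatched endpoint. For (2), if $(d,x)$ with $d\in D$ were in $G_P$, I would prolong the alternating path in $G'_{P_Z}$ witnessing $d\in D$ through $(d,x)$, then the matching edge $(x,S(x))$, and finally the $(+,-)$ edge from $S(x)\in N(U)$ to some $u\in U$ (this last edge lies in $G_P$ by condition~1, already established), producing an alternating path from a $(+,+)$ edge to an unmatched vertex. For (3), if $(v',x)$ were $(+,+)$ then Corollary~\ref{blocking} would force $S(x)$ to lie in $Z$, contradicting $S(x)\in V\setminus V'\subseteq N(U)$, which is disjoint from $Z$.

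For the backward direction, set $P:=P_Z\cup S$ and verify that the three forbidden structures of Theorem~\ref{thm:char-popular} are absent from $G_P$. First observe that conditions~1--3 and the stability of $S$ imply three useful facts: every $(+,+)$ edge of $G_P$ is contained in $V'$ (boundary edges are $(+,-)$, $(-,+)$, or $(-,-)$, and $S$ is stable inside $V\setminus V'$); every unmatched vertex of $P$ lies in $U\subseteq V'$; and the labels on edges inside $V'$ coincide with those induced by $P_Z$ in $G'$, so $G_P\cap E(V')=G'_{P_Z}$. Suppose toward contradiction that $G_P$ contains a bad alternating structure $W$, and decompose $W$ into maximal $V'$-segments joined by $V\setminus V'$-detours via boundary non-matching edges. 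Let $\sigma$ be the $V'$-segment containing a $(+,+)$ edge $e$. If $W$ has no detour then $W\subseteq V'$ is already bad in $G'_{P_Z}$, contradicting $P_Z$'s popularity. Otherwise $\sigma$ has an endpoint $v'_e$ adjacent to a detour of $W$; walking from $e$ inside $\sigma$ toward $v'_e$ produces an alternating path in $G'_{P_Z}$ starting at the blocking edge $e$ and ending with the matching edge incident to $v'_e$ (since the edge of $W$ at $v'_e$ leaving $V'$ is non-matching, forcing its $V'$-neighbour in $\sigma$ to be matching). By the very definition of $D$ this places $v'_e\in D$, and condition~2 then forces the boundary edge at $v'_e$ used by $W$ to be $(-,-)$ and hence absent from $G_P$, contradicting its presence in $W$.

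\emph{Main obstacle.} The delicate point is the backward direction: I must choose the correct walking direction from $e$ inside $\sigma$ so that $v'_e$ is genuinely detour-adjacent. For a type~1 alternating cycle every $V'$-segment has two detour-adjacent endpoints, so either direction works. For a type~2 path with two disjoint $(+,+)$ edges, if both lie in the same segment that substructure already contradicts popularity of $P_Z$; otherwise at least one lies in a segment with a forward or backward detour-adjacent endpoint. For a type~3 path from a $(+,+)$ edge to an unmatched vertex $u$, if $e$ lies in the $u$-segment the sub-path $u\to e$ is already bad for $P_Z$; otherwise walking from $e$ toward $u$ eventually crosses a detour and the first detour-adjacent endpoint reached does the job. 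The whole backward direction therefore rests on condition~2 fencing off every dangerous vertex from $V\setminus V'$ with $(-,-)$ edges, blocking any attempt by a bad structure to escape into a detour from a $(+,+)$ source.
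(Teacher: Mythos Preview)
Your proposal is correct and follows essentially the same approach as the paper: both directions hinge on Theorem~\ref{thm:char-popular}, the observation that every $(+,+)$ edge of $P$ lies inside $V'$, and that condition~2 fences off the dangerous vertices so no bad alternating structure can escape $V'$. The only cosmetic differences are that you invoke Corollary~\ref{blocking} directly for the stability of $S$ and for condition~3 (where the paper writes out the explicit four- and five-vertex augmenting paths), and that your backward direction is organized as a single segment-decomposition argument rather than the paper's case split over the three structure types---the underlying reasoning is identical.
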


\begin{proof}
Let us first assume that in $G$ there is a popular matching $P$ that contains $P_Z$ and leaves exactly $U$ uncovered. Since $V\setminus V' = N(U) \setminus P_Z(Z)$ and $P_Z$ must cover all vertices in $Z$ by construction, $S = P \setminus P_Z$ must cover exactly the vertices in $V\setminus V'$. We will first show that $S$ is a stable matching, and then prove that $P$ induces an edge labeling satisfying the three points above.

The stability of $S$ is easy to show. Without loss of generality we can assume that $P_Z$ does not cover the entire set $Z \cup N(U)$, because otherwise $S$ is defined on an empty graph. Assume now that edge $(p,q)$ blocks $S$, which we have just shown to cover exactly the vertices in $N(U) \setminus P_Z(Z)$. Then there exists a vertex $u \in U$ such that $u - P(p) - p - q - P(q)$ is an augmenting path in $G_P$, which contradicts the popularity of~$P$.

We now check the edge labeling property in each point separately.
    \begin{enumerate}
        \item Edges incident to $U$ trivially have a $'+'$ at the vertex in~$U$. The other label must be a $'-'$, otherwise we found a blocking edge at an unmatched vertex, which contradicts the popularity of~$P$.
        \item Assume indirectly that there are vertices $v\in V\setminus V'$ and $z\in D$ such that $(v,z)$ is not labelled $(-,-)$ with respect to~$P$. By the definition of $V'$, there exists a vertex $u \in U$ such that $u - P(v) - v - z$ is an augmenting path in $G_P$, furthermore by the definition of $D$, a blocking edge is reachable from $z$ via an alternating path in~$G_P$. This latter path is disjoint from $(v,P(v))$, because a blocking edge can only occur in the subgraph spanned by $Z \cup P_Z(Z)$ (see Corollary~\ref{blocking}) and from $z \in D$ it can only be reached via an alternating path that also runs in this subgraph, while both $v$ and $P(v)$ are in $V\setminus V'$. 
        The concatenation of these two paths is therefore an augmenting path to $P$ in $G_P$, which contradicts the popularity of~$P$.
        \item Since $v'$ prefers $x$ to $P(v')$, a $'+'$ at $v'$ is guaranteed. Assume indirectly that there are vertices $v' \in V'\setminus (D \cup U)$ and $x \in V\setminus V'$ such that $(v',x)$ blocks~$P$. Then by the definition of $V\setminus V'$, there is a vertex $u \in U$ such that $u - P(x) - x - v' - P(v')$ is an augmenting path in $G_P$, which contradicts the popularity of~$P$.
    \end{enumerate}

We now turn to proving the opposite direction. Assume that $S$ is stable in the graph spanned by the vertices $V\setminus V'$, it covers exactly $V\setminus V'$, and the three points on edge labeling are fulfilled. Since $(Z \cup P_Z(Z)) \cup (V\setminus V') = V \setminus U$, matching $P=P_Z\cup S$ leaves exactly $U$ uncovered. We will next utilize Theorem~\ref{thm:char-popular}, the characterization of popular matchings by Huang and Kavitha~\cite{HK13}, to show that $P$ is popular in~$G$.
    \begin{enumerate}
        \item $G_P$ admits no alternating cycle that contains a blocking edge.\\
        Neither $P_Z$, nor $S$ creates an alternating cycle with a blocking edge in the subgraphs $G_{P_Z}$ on the vertex set $Z \cup P_Z(Z)$ and $G_S$ on the vertex set $V\setminus V'$, respectively. Therefore, for such an alternating cycle in $G_P$, edges between the two sets of vertices must connect paths to form a cycle. However, our three points enforce that no blocking edge leaves $V\setminus V'$. Therefore, the blocking edge must be in~$G_{P_Z}$. These blocking edges can only be reached from dangerous vertices on an alternating path in $G_P$, and due to point~2 above, those paths are all disrupted by a $(-,-)$ edge between $V'$ and $V\setminus V'$.
        \item $G_P$ admits no alternating path that contains at least two blocking edges.\\
        As we have argued above, all blocking edges in $G_P$ must also block~$P_Z$. However, an entire alternating path with two blocking edges cannot run in $V'$, because it would contradict the popularity of~$P_Z$. For an alternating path containing a blocking edge, leaving $V'$ is also not an option, because of point~2 above.
        \item $G_P$ admits no alternating path that contains a blocking edge and starts at a vertex in~$U$.\\
        Similarly as above, the blocking edge must also block~$P_Z$, and the alternating path would have to leave $V'$, which is impossible due to point~2 above.
\end{enumerate}
With this we have shown that all three points are fulfilled by $P = P_Z \cup S$.
\end{proof}

\subsection{Implementation of the third test}
\label{sec:imp}
We now sketch the main idea on how to enforce the three points on edge labeling in Claim~\ref{extension}. For a pseudocode, please consult Algorithm~\ref{alg:delete}.

The two $'+'$ signs among the $3\times2$ edge labels are fulfilled trivially. Three of the 4 $'-'$ signs impose a requirement on vertices in $V\setminus V'$, while the fourth one imposes a requirement on vertices in~$D$. This latter one can be found in point 2, which requires $'-'$ votes for vertices in $D$ on edges between $D$ and $V\setminus V'$. Hence if a vertex $z\in D$ votes $'+'$ for a neighbor $x\in V\setminus V'$ then the stable matching $S$ described in Claim~\ref{extension} cannot exist. We test this property first, as it does not depend on how $P_Z$ is completed to~$P$. This constitutes the first phase of our subroutine.

For the completion of $P_Z$, we ensure the remaining three $'-'$ signs in three edge deletion rounds of the second phase as follows. For a vertex $x$ in $V\setminus V'$, each point determines a list of edges incident to $x$ such that the vote of $x$ should be a  $'-'$ for that edge with respect to~$P$. Each of these requirements imposes an upper bound on the rank of possible partners for each vertex in $V\setminus V'$. Our key step is to delete all the edges that are ranked worse than any of these bounds. If there is a stable matching in the remaining edge set covering every vertex in $V\setminus V'$, then this is a suitable matching $S$, because none of the deleted edges would have blocked it, as every deleted edge has at least one vertex where the bound, and thus the matching partner is higher in the preferences than that edge.

\begin{algorithm}
\DontPrintSemicolon 
\KwIn{A popular matching problem instance and an initial matching $P_Z$ that has passed the first two tests.}
\KwOut{A complete stable matching $S$ on $V\setminus V'$  or a 'NO' answer.}
\nl \For {$z \in D$ and $x \in N(z)\cap (V\setminus V')$} {
\nl \If {$x \succ_z P_Z(z)$}{\Return{\text{`NO'}}\;}}
\nl \For {$u \in U$} {
\nl \For {$x \in N(u) \cap (V\setminus V')$} {
    \nl delete all $(x,y)$ edges in $G[V\setminus V']$ such that $u \succ_x y$; 
    }
    }
\nl \For {$z \in D$} {
\nl \For {$x \in N(z)\cap (V\setminus V')$} {
    \nl delete all $(x,y)$ edges in $G[V\setminus V']$ such that $z \succ_x y$; 
    } 
    } 
\nl \For {$v \in V'\setminus (D\cup U)$} {
\nl \For {$x \in N(v)\cap (V\setminus V')$ such that $x\succ_v P_Z(v)$} {
    \nl delete all $(x,y)$ edges in $G[V\setminus V']$ such that $v \succ_x y$; 
    }
    }
\nl \If {a complete stable matching $S$ exists in the remaining graph $G[V\setminus V']$}{\Return{$S$} \;}

\nl \Return{\text{`NO'}}\;
\caption{Subroutine checking if a complete stable matching $S$ exists on $V\setminus V'$ fulfilling the conditions in Claim~\ref{extension}}
\label{alg:delete}
\end{algorithm}

The correctness of Algorithm~\ref{alg:delete} is proven by the following claim.

\begin{myclaim}
Assume that we are given a matching $P_Z$ that has passed the first two tests of Algorithm~\ref{alg: check U} and the first phase of the third test. A matching $S$ in the graph $G[V\setminus V']$ is stable, covers all vertices in $V\setminus V'$, and induces the three-point edge-labeling criteria phrased in Claim~\ref{extension} if and only if it is a stable matching that covers all vertices in $V\setminus V'$ in the graph we get after deleting each edge $(x,y) \in (V\setminus V') \times (V\setminus V')$ that fulfills any of the following points.
\begin{enumerate}
    \item $\exists u \in U$: $u \succ_x y$
    \item $\exists z \in D$: $z\succ_x y$
    \item $\exists v \in V'\setminus (D\cup U)$: $x\succ_v P_Z(v)$ and $v\succ_x y$
\end{enumerate}
\label{cl:deletion}
\end{myclaim}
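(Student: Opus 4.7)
My plan is to prove both directions of the equivalence separately, using the observation that each of the three deletion rules is the $V\setminus V'$-side contrapositive of the corresponding $'-'$-vote requirement in Claim~\ref{extension}, while the remaining vote-requirements (the two $'+'$ signs and the $'-'$ vote at the $D$-side) are guaranteed either trivially or by the first phase of the third test, which $P_Z$ has already passed by assumption.

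For the $(\Rightarrow)$ direction, assume $S$ is stable in $G[V\setminus V']$, covers $V\setminus V'$, and induces the three-point labeling. I would check that no matching edge of $S$ is deleted: if $(x, S(x)) \in S$ were removed by rule $i$, then that rule provides a witness $w$---a vertex in $U$, in $D$, or in $V' \setminus (D \cup U)$ with $x \succ_w P_Z(w)$, respectively---such that $w \succ_x S(x)$. But then the edge $(w, x)$ would carry a $'+'$ vote at $x$ with respect to $P = P_Z \cup S$, contradicting the $'-'$-at-$x$ requirement in the corresponding point of Claim~\ref{extension}. The symmetric argument applied at $S(x)$ rules out deletions triggered from the other endpoint. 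Stability in the reduced graph is then immediate, since removing edges cannot create new blocking pairs.

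For the $(\Leftarrow)$ direction, assume $S$ is stable in the reduced graph and covers $V\setminus V'$. I would first upgrade this to stability in $G[V\setminus V']$ by contradiction: a blocking edge $(x,y)$ that survives in $G[V\setminus V']$ but not in the reduced graph must have been deleted by some rule, yielding a witness $w$ with $w \succ_x y$; combined with $y \succ_x S(x)$ from the blocking assumption we obtain $w \succ_x S(x)$, so the same rule would also delete the matching edge $(x, S(x))$, contradicting $(x, S(x)) \in S$. The three labeling criteria then follow by a direct reading of the rules: for every $x \in V\setminus V'$ and every witness $w$ adjacent to $x$, the survival of $(x, S(x))$ under the corresponding rule forces $S(x) \succ_x w$, which is precisely the $'-'$-at-$x$ vote required. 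The $'+'$ votes are automatic ($U$-vertices are unmatched in point~1, and the $'+'$ at $v'$ in point~3 is built into the hypothesis), and the $'-'$ at $z \in D$ of point~2 is supplied by the first phase of the third test.

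The main subtlety I foresee is the stability upgrade in the converse direction, where one must chain $w \succ_x y \succ_x S(x)$ using strictness of preferences and then verify that the \emph{same} rule indeed also deletes the matching edge $(x, S(x))$. This has to be handled carefully for each of the three rules, accounting for the fact that a rule may fire on either endpoint of an edge in $G[V\setminus V']$. Everything else is a clean translation between the ``upper bound on the rank of the partner'' phrasing of the deletion rules and the ``$'-'$-at-$x$'' phrasing of the labeling criteria.
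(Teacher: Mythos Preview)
Your proposal is correct and follows essentially the same approach as the paper's proof: both translate the three $'-'$ labeling requirements of Claim~\ref{extension} into survival of the matching edges under the deletion rules, and both handle the stability upgrade in the $(\Leftarrow)$ direction by observing that a deleted blocking edge would force deletion of an incident matching edge. Your presentation is somewhat more explicit---in particular the paper states the $(\Rightarrow)$-stability direction and the conversion of $'-'$ votes into rank bounds more tersely---but the underlying argument is the same.
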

\begin{proof}
The condition on the set of covered vertices in the transformed graph is clearly necessary and sufficient, while it is clear that the matching in the new instance must be stable to ensure stability in the instance before edge deletions. We now revisit the three points in Claim~\ref{extension} one-by-one and translate them into equivalent edge deletions. Claim~\ref{extension} states that for each vertex $x \in V\setminus V'$, $P$ must fulfill the following criteria:
\begin{enumerate}
    \item $(u,x)$ is a $(+,-)$ edge for $\forall u \in U$,
    \item $(z,x)$ is a $(-,-)$ edge for $\forall z \in D$,
    \item $(v,x)$ is a $(+,-)$ edge for $\forall v \in V'\setminus (D\cup U)$ such that $x\succ_v P(v)$.
\end{enumerate}


Notice that the $'+'$ signs are unavoidable, while the first phase of Algorithm~\ref{alg:delete} guarantees the first $'-'$ sign in point~2. Thus we only need to make sure that the conditions imposed by the remaining three $'-'$ signs are fulfilled. These conditions state that each $x \in V\setminus V'$ is matched to a partner who is ranked better than each of 1) $u \in U$, 2) $z \in D$, and 3) $v \in V'\setminus (D\cup U)$ such that $x\succ_v P_Z(v)$. 
These conditions enforce that all edges that are ranked worse by $x$ than any of these three bounds need to be deleted to guarantee the three $'-'$ signs. To be more precise, an edge $(x,y) \in (V\setminus V') \times (V\setminus V')$ needs to be deleted if it fulfills any of the three points listed in Claim~\ref{cl:deletion}.


It is easy to see that if these three types of edge deletions are executed, then the edge labeling from Claim~\ref{extension} is guaranteed. All that remains is to show is that stable matchings in the new instance are also stable in the original one. If such a matching $M$ is blocked by an edge $(x,y)$, then $(x,y)$ must have been removed in the edge deletion round. However, due to the cover constraint on $V \setminus V'$, $M$ must match at least one of $x,y$ along an edge that is ranked better than~$(x,y)$. 
\end{proof}

\subsection{Correctness and running time}
\label{sec:corr}
\begin{theorem}
For a vertex set $U$ and $Z=V\setminus N(U)\setminus U$, a popular but not stable matching that leaves exactly the vertices in $U$ uncovered or a proof for its non-existence is outputted by Algorithm~\ref{alg: check U} in $O(n^{|Z|})(m + |Z|^3)$ time.
\end{theorem}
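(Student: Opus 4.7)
My plan splits the proof into correctness---both soundness and completeness of Algorithm~\ref{alg: check U}---and a running time estimate.

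For soundness, whenever the algorithm returns $P_Z\cup S$, the matching $S$ is produced by Algorithm~\ref{alg:delete}, so Claim~\ref{cl:deletion} tells me that the pair $(P_Z, S)$ satisfies the three edge-labeling conditions of Claim~\ref{extension}. The reverse direction of Claim~\ref{extension} then certifies popularity, while $P_Z$ covers $Z$ by construction and $S$ covers $V\setminus V' = N(U)\setminus P_Z(Z)$, so their union covers $Z\cup P_Z(Z)\cup N(U) = V\setminus U$, leaving exactly $U$ uncovered, as required.

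For completeness, let $P^*$ be any popular matching leaving exactly $U$ uncovered and set $P_Z^* := \{e \in P^* : e \cap Z \neq \emptyset\}$. Since $Z \cap N(U) = \emptyset$ and $U$ is precisely the uncovered set, every $z \in Z$ is matched in $P^*$; hence $P_Z^*$ covers $Z$ with each of its edges incident to $Z$, so $P_Z^* \in \mathcal{P}_Z$ and is eventually examined by the main loop. Test~1 is immediate from the contrapositive of Claim~\ref{P_Z not pop}. Test~3 follows from the forward direction of Claim~\ref{extension} combined with Claim~\ref{cl:deletion}: the matching $S^* := P^*\setminus P_Z^*$ is a complete stable matching on $V\setminus V'$ respecting the three edge-labeling conditions, hence survives all deletions of Algorithm~\ref{alg:delete}, which therefore returns some complete stable matching on $V\setminus V'$; by soundness any such output is a valid popular matching. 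The delicate step is Test~2: I need $P_Z^*$ to admit a blocking edge in $G'$. Here I couple Corollary~\ref{blocking}, which confines every blocking edge of $P^*$ to the subgraph $V'$, with the standing hypothesis that $G$ admits no stable matching: if $P_Z^*$ were stable in $G'$, then $P^*$ itself would have no blocking edges at all and thus be stable in $G$, a contradiction.

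For the running time, I bound $|\mathcal{P}_Z| \le n^{|Z|}$ by observing that each $z \in Z$ has at most $n$ candidate partners in $P_Z$; the matching constraint only tightens this count. For a single candidate $P_Z$, Test~1 verifies popularity in $G'$ via the weighted perfect matching reformulation of~\cite{KMN11} restricted to the $O(|Z|)$ vertices of $Z\cup P_Z(Z)$ carrying non-trivial structure, costing $O(|Z|^3)$; Test~2 is an $O(m)$ scan for a blocking pair in $G'$; Test~3 performs the edge deletions in $O(m)$ time and then runs Irving's algorithm~\cite{Irv85} on $G[V\setminus V']$ in $O(m)$. Summing yields $O(m+|Z|^3)$ per iteration and hence the claimed overall bound. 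The hardest step is the completeness half of Test~2: it is the unique place in the whole analysis where the no-stable-matching hypothesis is invoked, and without it the algorithm would appear to miss popular matchings whose restriction $P_Z^*$ happens to be stable in $G'$.
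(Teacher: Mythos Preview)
Your proof is correct and follows essentially the same approach as the paper's: soundness via Claim~\ref{extension} and the cover bookkeeping, completeness by restricting a hypothetical popular matching to its $Z$-incident edges and verifying the three tests, and the same per-iteration cost breakdown. The only presentational difference is in Test~2 of the completeness argument: the paper assumes directly that the target matching $P$ is popular \emph{but unstable} (as in the theorem statement) and then invokes Corollary~\ref{blocking} to locate a blocking edge inside $G'$, whereas you derive instability from the algorithm's standing input hypothesis that $G$ admits no stable matching---these are equivalent in context, so nothing is gained or lost.
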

\begin{proof}
We first prove that if Algorithm~\ref{alg: check U} terminates with a matching, then it is popular and leaves exactly the vertices in $U$ uncovered. Let $P=P_Z\cup S$ be the outputted matching. By definition, $P_Z$ covers the vertices in $Z \cup P_Z(Z)$, while $S$ covers the vertices in $V\setminus V' = N(U) \setminus P_Z(Z)$, which is altogether $Z \cup N(U) = V \setminus U$. The popularity follows from Claim~\ref{extension}.

For the other direction we assume that $P$ is a popular, but unstable matching in $G$, and that it leaves exactly $U$ uncovered. Algorithm~\ref{alg: check U} tests all $P_Z$ matchings that cover $Z$ and are inclusion-wise minimal with respect to this property. We know from Corollary~\ref{blocking} that $P$ is blocked by an edge that connects two vertices in $Z \cup P_Z(Z)$ for such a $P_Z$ matching. This $P_Z$ must therefore be generated in line~2, and pass the tests in lines~4 and~5. Furthermore, Claim~\ref{extension} shows that $S = P \setminus P_Z$ exists, which then guarantees that line~6 is also passed for this particular~$P_Z$. Note that more than one suitable stable matchings might exist---our algorithm only outputs one stable matching that extends $P_Z$ to a popular matching, which might be different from $P$ itself.

\textbf{Running time.} We assume that the input is given in a form of a vertex set marking $U$ and the strictly ordered list of edges at each vertex, in which list we assume comparison can be done in constant time. The lists also provide an edge list for the whole graph. The running time of Algorithm~\ref{alg: check U} is determined by the following factors.
\begin{enumerate}
    \item Line~1: determine $Z$.\\
    This takes $O(m)$, because it can be done by reading the list of edges once and deleting a vertex from $V$ if it is in $U$, or adjacent to any vertex in $U$.
    \item Line~2: determine $\mathcal{P}_Z$ to~$Z$.\\
    Choosing a partner for each vertex in $Z$ can be done in $O\left( \binom{n}{|Z|} \right)$ ways. 
    Each of the feasible $P_Z$ matchings constructed in this step needs to be investigated separately (line~3). 
    \item Line~4: check the popularity of $P_Z$ in~$G'$.\\
    We need to construct the edge labeling with respect to $P_Z$ in $O(m)$ time and then check whether $P_Z$ is indeed a maximum weight matching in the instance. 
    This can be done in $O(|Z|^3)$ time ~\cite[Theorem~26.2]{Sch03}.
    \item Line~5: check for blocking edges in $G'$.\\
    The labeling in the previous step already locates all these blocking edges.
    \item Line~6: checking the conditions in Claim~\ref{extension}.
    \begin{itemize}
        \item Determining $D$ can be done by growing alternating trees starting from each blocking edge in~$G'_{P_Z}$. There are at most $O(|Z|)$ end vertices of blocking edges, and even checking all paths starting from them can be done in $O(|Z|^3)$ time ~\cite[24.2a]{Sch03}.
        \item Testing in the first phase of Algorithm~\ref{alg:delete} can be done in constant time. Also, based on our three points in Claim~\ref{cl:deletion}, for each edge it can be decided in constant time whether it should be deleted or not.
        \item Irving's algorithm~\cite{Irv85} can find a stable matching or a proof for its non-existence in the transformed graph in $O(m)$ time. If the found stable matching does not cover all vertices in $V \setminus V'$, then no stable in this instance does, due to the Rural Hospitals Theorem~\cite[Theorem 4.5.2]{GI89}.
    \end{itemize}
\end{enumerate}
In total, this yields the following running time.
$$
O\left( m + \binom{n}{|Z|} \cdot (m+|Z|^3+|Z|^3+m+m) \right) \longrightarrow O\left( n^{|Z|} \cdot (m+|Z|^3)\right)$$
\end{proof}

Our algorithm can test on any instance of the popular roommates problem whether there is a popular matching that leaves a given non-empty vertex set $U$ uncovered. In graphs on an odd number of vertices, all matchings leave a non-empty set of vertices uncovered. Therefore, if $n$ is odd, we can iterate through all possible $U$ sets and derive whether a popular matching exists in the instance at all. For graphs with an even $n$, our algorithm is only able to decide whether a non-perfect popular matching exists. This is not surprising, because deciding whether a (perfect) popular matching exists in a complete graph with an even $n$ is $\NP$-complete~\cite{CK21}, while our algorithm runs in polynomial time if $|U|$ is small and the minimum degree in the graph is large.

On the positive side, for an odd $n$ and $\deg(v) \geq n-c$ 
for all $v \in V$ and some constant $c$, from the definition of $Z$ follows that $|Z| < c$. 
Since no vertex $u\in U$ is adjacent to a vertex in $U\cup Z$, we have that $|U\cup Z|\leq c$. Thus, the running time for checking the existence of a popular matching is polynomial for a constant $c$, as the following calculation demonstrates.
$$O\left( \sum_{i=1}^{c}\binom{n}{i} \cdot n^{c-i} \cdot (m+(c-i)^3)\right) \rightarrow O\left( n^{c}\cdot (m+c^3) \right)$$
By iterating through all possible sets of uncovered vertices in the order of $|U|$, our algorithm is able to find a maximum size popular as well in graphs with an odd $n$ and $\deg(v) \geq n-c$  for all $v \in V$.


\section{Example}
\label{sec:ex}

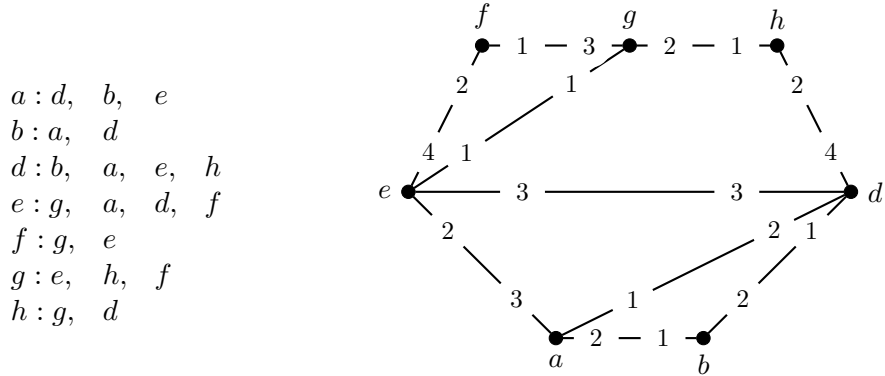
\begin{figure}[htb]
	\centering
		\begin{minipage}{0.2\textwidth}
		\[
		\begin{array}{llll}
		a  : d, \ & b, \ & e \\
		b  : a, \ & d \\
		d  : b, \ & a, \ & e, \ & h \\
		e  : g, \ & a, \ & d, \ & f \\
		f  : g, \ & e \\
		g  : e, \ & h, \ & f \\
		h  : g, \ & d
		\end{array}
		\]
	\end{minipage}\hspace{17mm}\begin{minipage}{0.35\textwidth}	
		\begin{tikzpicture}[scale=0.97, transform shape]
		\pgfmathsetmacro{\d}{2}
		\pgfmathsetmacro{\b}{3}
		
    \node[vertex, label=below:$a$] (A1) at (0,0) {};
	\node[vertex, label=below:$b$] (A2) at ($(A1) + (\d, 0)$) {};
	\node[vertex, label=right:$d$] (A3) at ($(A2) + (\d, \d)$) {};
	\node[vertex, label=left:$e$] (A4) at ($(A1) + (-\d, \d)$) {};
	\node[vertex, label=above:$f$] (A5)
	at ($(A4) + (0.5*\d, \d)$) {};
	\node[vertex, label=above:$g$] (A6)
	at ($(A5) + (\d, 0.0*\d)$) {};
	\node[vertex, label=above:$h$] (A7)
	at ($(A6) + (\d, -0.0*\d)$) {};
	
	\draw [thick] (A1) -- node[edgelabel, near start] {2} node[edgelabel, near end] {1} (A2);
	\draw [thick] (A2) -- node[edgelabel, near start] {2} node[edgelabel, near end] {1} (A3);
	\draw [thick] (A3) -- node[edgelabel, near start] {2} node[edgelabel, near end] {1} (A1);
	\draw [thick] (A1) -- node[edgelabel, near start] {3} node[edgelabel, near end] {2} (A4);
	\draw [thick] (A3) -- node[edgelabel, near start] {3} node[edgelabel, near end] {3} (A4);
	\draw [thick] (A3) -- node[edgelabel, near start] {4} node[edgelabel, near end] {2} (A7);
	\draw [thick] (A4) -- node[edgelabel, near start] {4} node[edgelabel, near end] {2} (A5);
	\draw [thick] (A4) -- node[edgelabel, near start] {1} node[edgelabel, near end] {1} (A6);
	\draw [thick] (A5) -- node[edgelabel, near start] {1} node[edgelabel, near end] {3} (A6);
	\draw [thick] (A6) -- node[edgelabel, near start] {2} node[edgelabel, near end] {1} (A7);
\end{tikzpicture}
\end{minipage}
\caption{An example instance with no stable matching and no popular matching.}
\label{fig:ex}
\end{figure}

We demonstrate our algorithm on a chosen $P_Z$ in the instance depicted in Figure~\ref{fig:ex}. Here we only discuss the case $U=\{b\}$, $P_Z= \{(e,f),(g,h)\}$. A fully detailed example that checks the existence of a popular matching in the same instance can be found in the appendix.

For this $P_Z$, the graph $G'=G[Z\cup P_Z(Z) \cup U]$ is defined by vertices $V'=\{b,e,f,g,h\}$ and edges $\{(e,f),(e,g),(f,g),(g,h)\}$. 
The first test in Algorithm~\ref{alg: check U} checks whether $P_Z$ is popular in~$G'$. As only one edge in $G'$, namely $(e,g)$, blocks $P_Z$ and from this edge there is no alternating path or cycle in $G'_{P_Z}$, based on the characterization in Theorem~\ref{thm:char-popular} we can conclude that $P_Z$ is popular in~$G'$. We remark that in practice, the fulfillment of the characterization is checked such that the three points are converted into a weight requirement in a weighted matching instance defined specifically for~$P_Z$. The second test in Algorithm~\ref{alg: check U} only requires the existence of a blocking edge for $P_Z$ in $G'$, which is granted by $(e,g)$. The first two tests are thus passed by~$P_Z$.

Consider now the third test in Algorithm~\ref{alg: check U}. For this test we build the set of remaining vertices $V\setminus V'=\{a,d\}$ and the set of dangerous vertices $D=\{f,h\}$. Vertex $f$ has no neighbor in $V\setminus V'$, $h$ is adjacent to $d$, and $P_Z(h)\succ_h d$, so the stable matching $(a,d)$ passes the first phase of Algorithm~\ref{alg:delete} in lines~1-2.  
However, in line~5 of Algorithm~\ref{alg:delete} this edge is deleted, because $d\in N(b)\cap(V\setminus V')$ and $b\succ_d a$. Therefore the remaining graph on $V\setminus V'$ is the empty graph on the two vertices $a$ and $d$, thus there is no complete matching that covers these two vertices after the edge deletion steps, so Algorithm~\ref{alg:delete} returns the answer 'NO'.

We conclude that this $P_Z$ fails the third test in Algorithm~\ref{alg: check U} because on $V\setminus V'$ there is no complete stable matching after the edge deletions.

\section{Computational study}
\label{sec:exp}

We tested our algorithm on various randomly generated instances. For each combination of $n \in \left\{7,9,11\right\}$ and $c \in \left\{3,4,5 \right\}$, 1 million graphs of minimum degree $n-c$ were generated using the Erd\H{o}s-R\'{e}nyi model~\cite{ER60}. Each edge was added with probability $p=0.8$,  
and only those graphs were kept that met the bound $n-c$ on the minimum degree and indeed had at least one vertex of degree exactly $n-c$. The preference list of each vertex was then generated by choosing uniformly at random among all permutations of its edges.

We first ran Irving's algorithm~\cite{Irv85} on each generated instance to test whether it admits a stable matching, and for those with no stable matching, we also ran our algorithm for all odd-cardinality independent vertex sets $U$ with $|U| \leq c$ to test whether there is a popular matching leaving exactly $U$ uncovered. These tests delivered the number of instances with no stable matching, and the number of instances with a popular, but no stable matching among the 1 million tested instances. Table~\ref{ta:exp} contains these numbers for all tested $(n,c)$ combinations.

Focusing on one row of the table, one can observe that for a fixed $c$ and increasing $n$, the number of instances with no stable matching increases, while the number of instances among them that admit a popular matching decreases. Fixing $n$ and increasing $c$ leads to less instances with no stable matching, and more instances among them that admit a popular matching. In general we can observe that for randomly graphs with a high minimum degree, very few instances occur that admit a popular matching but no stable matching. The monotonicity in the existence probability of a stable matching is consistent with results measured in earlier studies~\cite{Irv85,Pro14}.

The experiments were run on a standard desktop computer powered by Intel i5-3470 CPU running at 3.6GHz and 20 GiB of RAM. 
We computed pairings using the LEMON Graph Library \cite{DEZSO201123} implementation of the maximum weight perfect matching algorithm, which is based on the blossom algorithm of Edmonds~\cite{edmonds1965paths}.

\begin{table}[tb]
\centering
\begin{tabular}{|l|r|r|r|}
\hline
 & $n=7$ & $n=9$ & $n=11$ \\ \hline
\multirow{2}{*}{$c=3$} &  384678 & 508843 & 598525 \\
                        &  146  &  32 &10
                        \\ \hline
\multirow{2}{*}{$c=4$} &  298860 & 448599 & 553813 \\
                        &  1415  & 216  & 38\\ \hline
\multirow{2}{*}{$c=5$} &  211911 & 384468 & 506958 \\
                        &  8195  & 914  & 138\\ \hline
                        
\end{tabular}
\caption{The tested $n$ and $c$ values are organized in the different columns and rows. Each cell contains the number of instances with no stable matching as the top entry and the number of instances with a popular, but no stable matching as the bottom entry.}
\label{ta:exp}
\end{table}

\section{Conclusion and open questions}
\label{sec:concl}

The most prominent direction for future results is to accelerate our algorithm to reach a fixed parameter tractability result for the discussed case 
or to discover further polynomially solvable instance classes of the popular roommates problem. As argued in Section~\ref{sec:corr}, graphs with a high minimum degree can only be a subject of such investigation if combined with an odd $n$, which is the case we covered in this paper. However, other graph parameters might prove to be fruitful. Parameterized complexity results on other popular matching problems restrict the variability of preferences~\cite{HKMN11,KNN+18} or operate on graphs with a bounded treewidth~\cite{FKPZ19}.

Decomposing a popular matching to a set of edges that is stable on their subgraph and to a set of edges that is popular on their subgraph appears in two further papers. Cseh and Kavitha~\cite{CK21} identified the set of edges that can appear in a popular matching on a bipartite instance by showing that any popular matching $M$ can be decomposed as $M = M_0 \cup M_1$, where $M_0$ is a so-called \emph{dominant} matching in the subgraph induced by the vertices matched in $M_0$, and in the subgraph induced by the remaining vertices, $M_1$ is stable. Kavitha~\cite{Kav19} searched for popular matchings in non-bipartite instances by showing that every popular matching can be partitioned into a stable part and a \emph{truly popular} part. Discovering a connection between these three decompositions might lead to new structural insights.




\bibliography{mybib}
\appendix
\section{Examples}

We now list two examples, one NO-instance and one YES-instance.

\subsection{NO-instance}
We demonstrate a full run of our algorithm on the instance depicted in Figure~\ref{fig:ex}. It is easy to see that the instance admits no stable matching, because each stable matching would have to include the edge $(e,g)$, and after fixing this edge, we are left with a preference cycle of length~3 on vertices $\left\{a,b,d\right\}$, which proves the non-existence of a stable matching.



Since $n$ is odd, each popular matching must leave an odd number of vertices uncovered. In order to check whether there is a popular matching, we now iterate through all vertex sets $U$ with an odd cardinality and no spanned edge, and apply Algorithms~\ref{alg: check U} and~\ref{alg:delete} to determine whether there is a popular matching that leaves exactly the vertices in $U$ uncovered. Table~\ref{ta:ex} summarizes these steps. As the instance admits no popular matching, each tested $P_Z$ fails at some point. The exact explanation on how this happens is deferred to the list marked with capital letters.

\begin{table}[tb]
\centering
 \resizebox{1\textwidth}{!}{  
\begin{tabular}{|l||r|r|r|r|}
\hline
 Case & $P_Z$ & $V(G')$ & $E(G')$ & test  \\ \hline\hline
$U=\{a\}$&  $(f,g),(d,h)$ &  $a,d,f,g,h$ & $(a,d),(d,h),(f,g),(g,h)$ & 1 (\newtag{A}{l1}) \\
  $Z=\{f,g,h\}$   &  $(e,f),(g,h)$  & $a,e,f,g,h$ & $(a,e),(e,f),(e,g),(f,g),(g,h)$ & 1 (\newtag{B}{l2})  \\ \hline
  
    $U=\{b\}$&  $(e,f),(g,h)$ &  $b,e,f,g,h$ & $(e,f),(e,g),(f,g),(g,h)$ & 3 (\newtag{C}{l3}) \\
  $Z=\{e,f,g,h\}$   &  $(a,e),(f,g),(d,h)$  & $V(G)$ & $E(G)$ & 1 (\newtag{D}{l4})  \\ \hline     
  
  $U=\{d\}$&  $(f,g)$ &  $d,f,g$ & $(f,g)$ & 2 (\newtag{E}{l5}) \\
  $Z=\{f,g\}$   &  $(e,f,(g,h)$  & $d,e,f,g,h$ & $(d,e),(d,h),(e,f),(e,g),(f,g),(g,h)$ & 1 (\newtag{F}{l6})  \\ \hline 
    
    $U=\{e\}$&  $(a,b),(d,h)$ &  $a,b,d,e,h$ & $(a,b),(a,d),(a,e),(b,d),(d,e),(d,h)$ & 1 (\ref{l6}) \\
  $Z=\{b,h\}$   &  $(a,b),(g,h)$  & $a,b,e,g,h$ & $(a,b),(a,e),(e,g),(g,h)$ & 1 (\newtag{G}{l7})  \\ 
  &  $(b,d),(g,h)$  & $a,b,e,g,h$ & $(a,b),(a,e),(e,g),(g,h)$ & 1 (\ref{l7})  \\ \hline 
  
    $U=\{f\}$&  $(a,b),(d,h)$ &  $a,b,d,f,h$ & $(a,b),(a,d),(b,d),(d,h)$ & 3 (\newtag{H}{l8}) \\
  $Z=\{a,b,d,h\}$   &  $(a,b),(d,e),(g,h)$  & $V(G)$ & $E(G)$ & 1 (\newtag{I}{l9})  \\ 
  &  $(a,e),(b,d),(g,h)$  & $V(G)$ & $E(G)$ & 1 (\newtag{J}{l10})  \\ \hline 
  
    $U=\{g\}$&  $(a,b),(d,e)$ &  $a,b,d,e,g$ & $(a,b),(a,d),(a,e),(b,d),(d,e),(e,g)$ & 1 (\ref{l7}) \\
  $Z=\{a,b,d\}$   &  $(a,b),(d,h)$  & $a,b,d,g,h$ & $(a,b),(a,d),(b,d),(d,h),(g,h)$ & 1 (\ref{l7})  \\ 
  &  $(a,e),(b,d)$  & $a,b,d,e,g$ & $(a,b),(a,d),(a,e),(b,d),(d,e),(e,g)$ & 1 (\ref{l7})  \\ \hline 
  
    $U=\{h\}$&  $(a,b),(e,f)$ &  $a,b,e,f,h$ & $(a,b),(a,e),(e,f)$ & 2 (\ref{l5}) \\
  $Z=\{a,b,e,f\}$   &  $(a,b),(d,e),(f,g)$  & $V(G)$ & $E(G)$ & 1 (\newtag{K}{l11})  \\ 
  &  $(a,e),(b,d),(f,g)$  & $V(G)$ & $E(G)$  & 1 (\newtag{L}{l12})  \\ \hline
  
  $U=\{a,f,h\}$&  $\varnothing$ &  $a,f,h$ & $\varnothing$ & 3 (\newtag{M}{l13}) \\
  $Z=\varnothing$   &   & &  & \\ \hline
  
  $U=\{b,e,h\}$&  $\varnothing$ &  $b,e,h$ & $\varnothing$ & 3 (\newtag{N}{l14}) \\
  $Z=\varnothing$   &   & &  & \\ \hline
  
  $U=\{b,f,h\}$&  $\varnothing$ &  $b,f,h$ & $\varnothing$ & 3 (\newtag{O}{l15}) \\
  $Z=\varnothing$   &   & &  & \\ \hline
\end{tabular}
}
\caption{The first column contains the tested $U$ set and the set of vertices $Z$, calculated from~$U$. In the second column, we list the possible $P_Z$ matchings. To each of these, the vertices and edges of $G'$ are listed. Finally, we mark which test the current $P_Z$ failed and give a detailed explanation of this failure below the table.}
\label{ta:ex} 
\end{table}

\begin{itemize}
\item[\ref{l1}:] $P_Z$ is not popular in $G'$, because the blocking edge $(g,h)$ can be reached from $a$ in $G'_{P_Z}$.
\item[\ref{l2}:] $P_Z$ is not popular in $G'$, as $(a,e)$ blocks $P_Z$.
\item[\ref{l3}:] Edge $(e,g)$ blocks $P_Z$ and $D=\{f,h\}$. The graph spanned by $V\setminus V'=\{a,d\}$ consists of the edge $(a,d)$ itself, but even this edge is deleted in Algorithm~\ref{alg:delete}. We conclude that after the edge deletions in the third test, no stable matching covers the entire $V\setminus V'$.
\item[\ref{l4}:] $P_Z$ is not popular in $G'$, as $(b,d)$ blocks $P_Z$.
\item[\ref{l5}:] $P_Z$ is stable in $G'$.
\item[\ref{l6}:] $P_Z$ is not popular in $G'$, as $(d,e)$ blocks $P_Z$.
\item[\ref{l7}:] $P_Z$ is not popular in $G'$, as $(e,g)$ blocks $P_Z$.
\item[\ref{l8}:] Edge $(a,d)$ blocks $P_Z$ and $D=\{b,h\}$. The graph spanned by $V\setminus V'=\{e,g\}$ consists of the edge $(a,d)$ itself, but even this edge is deleted in Algorithm~\ref{alg:delete}, because $g\succ_{h} d$. We conclude that after the edge deletions in the third test, no stable matching covers the entire $V\setminus V'$.
\item[\ref{l9}:] $P_Z$ is not popular in $G'$, because the blocking edge $(a,d)$ can be reached from $f$ in $G'_{P_Z}$.
\item[\ref{l10}:] $P_Z$ is not popular in $G'$, because the blocking edge $(a,b)$ can be reached from $f$ in $G'_{P_Z}$.
\item[\ref{l11}:] $P_Z$ is not popular in $G'$, because the blocking edge $(e,g)$ can be reached from $h$ in $G'_{P_Z}$.
\item[\ref{l12}:] $P_Z$ is not popular in $G'$, because the blocking edge $(a,b)$ can be reached from $h$ in $G'_{P_Z}$.
\item[\ref{l13}:] In Algorithm~\ref{alg:delete} we delete $(b,d)$ since $a\succ_{b} d$. After this deletion, no matching covers $V\setminus V'=\{b,d,e,g\}$. 
\item[\ref{l14}:] In Algorithm~\ref{alg:delete} we delete $(a,d)$ since $b\succ_{d} a$. After this deletion, no matching covers $V\setminus V'=\{a,d,f,g\}$. 
\item[\ref{l15}:] In Algorithm~\ref{alg:delete} we delete $(a,d)$ since $b\succ_{d} a$. After this deletion, no matching covers $V\setminus V'=\{a,d,e,g\}$.
\end{itemize}

\subsection{YES-instance}
\begin{figure}[htb]
	\centering
		\begin{minipage}{0.2\textwidth}
		\[
		\begin{array}{llllll}
		a  : e, \ & b \\
		b  : e, \ & h, \ & g, \ & d, \ & a, \ & f \\
		d  : h, \ & e, \ & b \\
		e  : d, \ & h, \ & b, \ & g, \ & a, \ & f \\
		f  : b, \ & e \\
		g  : e, \ & b, \ & h \\
		h  : g, \ & e, \ & d, \ & b
		\end{array}
		\]
	\end{minipage}\hspace{30mm}\begin{minipage}{0.35\textwidth}	
		\begin{tikzpicture}[scale=0.97, transform shape]
		\pgfmathsetmacro{\d}{2}
		\pgfmathsetmacro{\b}{3}
		
    \node[vertex, label=below:$a$] (A1) at (0,0) {};
	\node[vertex, label=below:$b$] (A2) at ($(A1) + (1.5*\d, 0)$) {};
	\node[vertex, label=right:$d$] (A3) at ($(A2) + (\d, \d)$) {};
	\node[vertex, label=left:$h$] (A7) at ($(A1) + (-0.5*\d, \d)$) {};
	\node[vertex, label=above:$g$] (A6)
	at ($(A7) + (0.5*\d, \d)$) {};
	\node[vertex, label=left:$f$] (A5)
	at ($(A6) + (\d, 0.5*\d)$) {};
	\node[vertex, label=above:$e$] (A4)
	at ($(A3) + (0, \d)$) {};
	
	\draw [thick] (A1) -- node[edgelabel, near start] {2} node[edgelabel, near end] {5} (A2);
	\draw [thick] (A2) -- node[edgelabel, near start] {4} node[edgelabel, near end] {3} (A3);
	\draw [thick] (A3) -- node[edgelabel, near start] {2} node[edgelabel, near end] {1} (A4);
	\draw [thick] (A1) -- node[edgelabel, near start] {1} node[edgelabel, near end] {5} (A4);
	\draw [thick] (A2) -- node[edgelabel, near start] {1} node[edgelabel, near end] {3} (A4);
	\draw [thick] (A2) -- node[edgelabel, near start] {6} node[edgelabel, near end] {1} (A5);
	\draw [thick] (A2) -- node[edgelabel, near start] {3} node[edgelabel, near end] {2} (A6);
	\draw [thick] (A2) -- node[edgelabel, near start] {2} node[edgelabel, near end] {4} (A7);
	\draw [thick] (A3) -- node[edgelabel, near start] {1} node[edgelabel, near end] {3} (A7);
	\draw [thick] (A4) -- node[edgelabel, near start] {6} node[edgelabel, near end] {2} (A5);
	\draw [thick] (A4) -- node[edgelabel, near start] {4} node[edgelabel, near end] {1} (A6);
	\draw [thick] (A4) -- node[edgelabel, near start] {2} node[edgelabel, near end] {2} (A7);
	\draw [thick] (A6) -- node[edgelabel, near start] {3} node[edgelabel, near end] {1} (A7);
\end{tikzpicture}
\end{minipage}
\caption{Example instance on 7 vertices that admits a popular matching but no stable matching.}
\label{fig:ex2}
\end{figure}
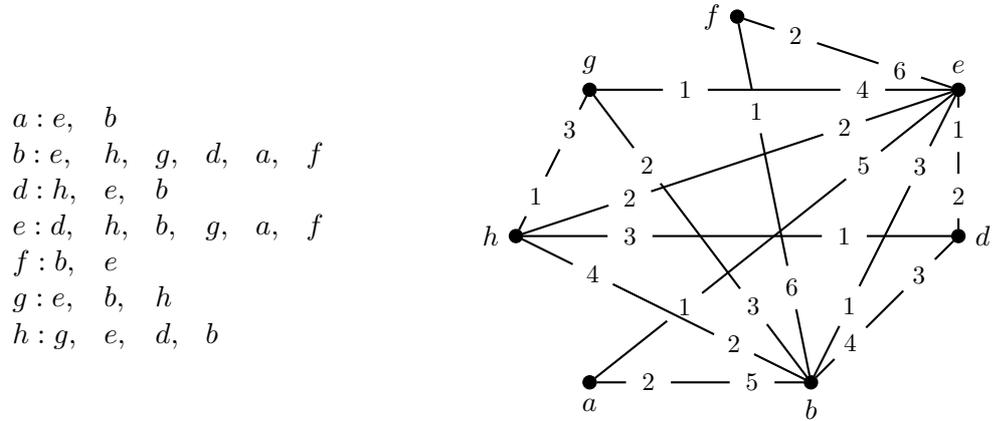
The instance depicted in Figure~\ref{fig:ex2} admits a popular matching, but no stable matching. Since $n$ is odd, each popular matching must leave an odd number of vertices uncovered. In order to check whether there is a popular matching, we start iterating through all vertex sets $U$ with an odd cardinality and no spanned edge, and apply Algorithms~\ref{alg: check U} and~\ref{alg:delete} to determine whether there is a popular matching that leaves exactly the vertices in $U$ uncovered. Table~\ref{ta:ex2} summarizes these steps. A popular matching consisting of edges $(a,b),(d,h),(e,g)$ is found in the line marked by~(*). The algorithm then terminates with outputting this matching.



\begin{table}[bt]
\centering
 \resizebox{1\textwidth}{!}{
\begin{tabular}{|l||r|r|r|r|r|}
\hline
 Case & $P_Z$ & $V(G')$ & $E(G')$ & test & reason for test failure\\ \hline\hline
$U=\{a\}$ &  $(b,d),(e,f),(g,h)$ & $V(G)$ & $E(G)$ & 1 & $(a,e)$ blocks $P_Z$\\
  $Z=\{d,f,g,h\}$   &  $(b,f),(d,e),(g,h)$  & $V(G)$  & $E(G)$ & 1 & $(a,e)$ blocks $P_Z$\\ 
   & $(b,f),(d,h),(e,g)$ & $V(G)$ & $E(G)$ & 1 & $(a,b)$ blocks $P_Z$\\
   & $(b,g),(d,h),(e,f)$ & $V(G)$ & $E(G)$ & 1 & $(a,b)$ blocks $P_Z$\\ \hline
$U=\{b\}$ & $\varnothing$ & $b$ & $\varnothing$ & 3 & no complete matching in $G\setminus G'$\\
$Z=\varnothing$ & & & & & \\ \hline
$U=\{d\}$ & $(a,b),(e,f),(g,h)$ & $V(G)$ & $E(G)$ & 1 & $(d,e)$ blocks $P_Z$\\
$Z=\{a,f,g\}$ & $(a,e),(b,f),(g,h)$ & $V(G)$ & $E(G)$ & 1 & $(d,e)$ blocks $P_Z$\\ \hline
$U=\{e\}$ & $\varnothing$ & $e$ & $\varnothing$ & 3 & no complete matching in $G\setminus G'$\\
$Z=\varnothing$ & & & & & \\ \hline
$U=\{f\}$ & $(a,b),(d,e),(g,h)$ & $V(G)$ & $E(G)$ & 1 & the blocking edge $(b,g)$ is reachable\\ 
$Z=\{a,d,g,h\}$ & & & & & from $f$ in $G_{P_Z}$ via an alternating path\\
 & $(a,e),(b,d),(g,h)$ & $V(G)$ & $E(G)$ & 1 & the blocking edge $(d,e)$ is reachable\\
 & & & & & from $f$ in $G_{P_Z}$ via an alternating path\\
 & $(a,e),(b,g),(d,h)$ & $V(G)$ & $E(G)$ & 1 & the blocking edge $(e,g)$ is reachable\\
 & & & & & from $f$ in $G_{P_Z}$ via an alternating path\\ 
  & $(a,b),(d,h),(e,g)$ & $V(G)$ & $E(G)$ & - & *\\\hline
                        
\end{tabular}
}
\caption{The first column contains the tested $U$ set and the set of vertices $Z$, calculated from~$U$. In the second column, we list the possible $P_Z$ matchings. To each of these, the vertices and edges of $G'$ are listed. Then we mark which test the current $P_Z$ failed and explain this failure in the last column.}
\label{ta:ex2}
\end{table}


\end{document}